\documentclass[11pt,reqno,letterpaper]{amsart}

\usepackage[margin=1in]{geometry}
\usepackage[T1]{fontenc}
\usepackage{lmodern}
\usepackage{microtype}
\usepackage{amsmath,amssymb,mathtools}
\usepackage{xcolor}
\usepackage{aliascnt}
\usepackage[colorlinks=true,allcolors=blue,hypertexnames=false]{hyperref}
\usepackage[noabbrev,capitalize,nameinlink]{cleveref}

\newtheorem{theorem}{Theorem}[section]

\newaliascnt{proposition}{theorem}
\newtheorem{proposition}[proposition]{Proposition}
\aliascntresetthe{proposition}

\newaliascnt{lemma}{theorem}

\aliascntresetthe{lemma}

\newaliascnt{corollary}{theorem}
\newtheorem{corollary}[corollary]{Corollary}
\aliascntresetthe{corollary}

\newaliascnt{remark}{theorem}

\aliascntresetthe{remark}

\newaliascnt{definition}{theorem}

\aliascntresetthe{definition}

\crefname{theorem}{theorem}{theorems}
\Crefname{theorem}{Theorem}{Theorems}

\crefname{proposition}{proposition}{propositions}
\Crefname{proposition}{Proposition}{Propositions}

\crefname{lemma}{lemma}{lemmas}
\Crefname{lemma}{Lemma}{Lemmas}

\crefname{corollary}{corollary}{corollaries}
\Crefname{corollary}{Corollary}{Corollaries}

\crefname{remark}{remark}{remarks}
\Crefname{remark}{Remark}{Remarks}

\crefname{definition}{definition}{definitions}
\Crefname{definition}{Definition}{Definitions}

\newcommand{\R}{\mathbb R}
\newcommand{\E}{\mathbb E}
\newcommand{\Prob}{\mathbb P}
\newcommand{\Sym}{\operatorname{Sym}}
\newcommand{\rank}{\operatorname{rank}}
\newcommand{\tr}{\operatorname{tr}}
\newcommand{\spann}{\operatorname{span}}

\newcommand{\norm}[1]{\left\lVert #1\right\rVert_2}
\newcommand{\abs}[1]{\left\lvert #1\right\rvert}
\newcommand{\ip}[2]{\left\langle #1,#2\right\rangle}
\newcommand{\Dfun}[2]{D(#1,#2)}

\title[The sharp dimension bound in the JL lemma]
{The Sharp Dimension Bound in the Johnson--Lindenstrauss Lemma}

\author{Vishesh Jain}
\address{Department of Mathematics, Statistics, and Computer Science,
University of Illinois Chicago, Chicago, IL 60607, USA}
\email{visheshj@uic.edu}

\begin{document}

\begin{abstract}
The Johnson--Lindenstrauss lemma asserts that every set of $n$ points in $d$-dimensional Euclidean space embeds into $O(\varepsilon^{-2}\log n)$-dimensional Euclidean space with distortion at most $1+\varepsilon$. Larsen and Nelson conjectured that the optimal target dimension throughout the full range of the parameters $n,d, \varepsilon$ is 
\[
\Theta\left(\min\left\{d,n-1,\frac{\log(2+\varepsilon^2n)}{\varepsilon^2}\right\}\right).
\]
We resolve this conjecture in the affirmative. In fact, we prove the stronger statement that the upper bound is attained by a linear map. The matching lower bound, due to Larsen--Nelson and Alon--Klartag, holds even for nonlinear embeddings.
\end{abstract}

\maketitle

\section{Introduction}\label{sec:introduction}

A fundamental problem in the theory of metric embeddings is to determine,
for given $n,d\geq 2$ and $0<\varepsilon<1/2$, the smallest integer $r$
such that every $n$-point set $X\subseteq\R^d$ admits an embedding
$f:X\to(\R^r,\|\cdot\|_2)$ with distortion at most $1+\varepsilon$. By
this we mean that there exists $\lambda>0$ such that
\[
 \lambda\norm{x-y}
 \leq \norm{f(x)-f(y)}
 \leq (1+\varepsilon)\lambda\norm{x-y}
\]
for all $x,y\in X$.

The celebrated Johnson--Lindenstrauss lemma \cite{JL} asserts that one may
take
\[
 r=O(\varepsilon^{-2}\log n).
\]
Moreover, the embedding may be chosen as the restriction of a random linear map
$\R^d\to\R^r$ whose distribution is independent of $X$; for example, a suitably
normalized Gaussian matrix succeeds with positive probability. A second 
upper bound follows from the affine span of $X$: after translating the
point set, one may realize it isometrically in dimension at most
$\min\{d,n-1\}$. Taking the better of these two alternatives gives the
previously best known upper bound
\[
 r=O\left(
 \min\left\{
 d,n-1,\varepsilon^{-2}\log n
 \right\}
 \right).
\]

Determining whether this upper bound is optimal led to a sequence of lower
bounds. Johnson and Lindenstrauss already showed that, for fixed distortion,
the logarithmic dependence on $n$ cannot be removed. Alon \cite{Alon} proved that
there are $n$-point subsets of $\R^n$ for which every embedding with
distortion at most $1+\varepsilon$ requires
\[
 \Omega\left(
 \min\left\{
 n,\frac{\varepsilon^{-2}\log n}{\log(1/\varepsilon)}
 \right\}
 \right)
\]
dimensions; the same asymptotic lower bound can also be
deduced from earlier spherical-code bounds of Welch and Levenshtein
\cite{Welch,Levenshtein}. 

Alon's lower bound
falls short of the Johnson--Lindenstrauss upper bound by a factor of
order $\log(1/\varepsilon)$. Larsen and Nelson \cite{LNlinear} showed that this loss can
be removed for linear embeddings on a different family of examples. More precisely, for every $q\geq2$ and
$0<\varepsilon<1/2$, they constructed a $q^{O(1)}$-point subset of
$\R^q$ such that every linear embedding with distortion at most
$1+\varepsilon$ requires
$
 \Omega\left(
 \min\left\{
 q,\varepsilon^{-2}\log q
 \right\}
 \right)
$
dimensions. Larsen and Nelson \cite{LNnonlinear} later obtained lower bounds for arbitrary,
possibly nonlinear, embeddings. They proved that,
whenever
\[
 \varepsilon>
 \frac{\log^{0.5001}n}{\sqrt{\min\{n,d\}}},
\]
there is an $n$-point subset of $\R^d$ for which embedding with distortion at most $1+\varepsilon$
requires
\[
 \Omega\left(
 \varepsilon^{-2}\log(\varepsilon^2n)
 \right)
\]
dimensions. They further conjectured \cite[Conjecture~1]{LNnonlinear} that, throughout the full range of
parameters, the smallest possible target dimension is

\[
 \Theta\left(
 \min\left\{
 d,n-1,\frac{\log(2+\varepsilon^2n)}{\varepsilon^2}
 \right\}
 \right).
\]
The lower-bound direction of this conjecture was subsequently established
by Alon and Klartag \cite{AK}. In the present paper, we prove the
upper bound.

\medskip

Our main result is the following.

\begin{theorem}
\label{thm:jl-main}
There is a universal constant $C>0$ such that the following holds. Let
$n,d\geq2$, let $0<\varepsilon<1/2$, and let $X\subseteq\R^d$ be a set
of $n$ points. Then there is an integer
\[
 r\leq
 C\min\left\{
 d,n-1,\frac{\log(2+\varepsilon^2n)}{\varepsilon^2}
 \right\}
\]
and a linear map $L:(\R^d, \|\cdot \|_2)\to (\R^r, \|\cdot \|_2)$ whose restriction to $X$ has distortion
at most $1+\varepsilon$. 
\end{theorem}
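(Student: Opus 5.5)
The plan is to reduce to the one nontrivial regime and then build the linear map by a rank-reduction procedure driven by discrepancy, rather than by a single oblivious random projection. Translating $X$ so that $0\in X$ and replacing $\R^d$ by the span of $X$, I may assume $d\le n-1$. If $d\le C\varepsilon^{-2}\log(2+\varepsilon^2 n)$, the identity map already works. The classical Johnson--Lindenstrauss lemma handles the case $\log n\lesssim\log(2+\varepsilon^2n)$, which holds, e.g., when $n\ge\varepsilon^{-4}$. So the remaining case is
\[
\varepsilon^{-2}\log(2+\varepsilon^2 n)\ll d<n\le\varepsilon^{-4}.
\]
Put $k=C\varepsilon^{-2}\log(2+\varepsilon^2n)$. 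The difficulty is that the $\binom n2$ normalized difference vectors $v_{xy}=(x-y)/\|x-y\|_2$ are too many for a union bound at dimension $k$. This is genuinely an obstruction for Gaussian maps: Gordon's theorem needs $k\gtrsim w(T)^2/\varepsilon^2$, and the Gaussian width $w(T)$ of $T=\{v_{xy}\}$ can be of order $\sqrt{\log n}$.

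I would therefore reformulate the problem as finding a positive semidefinite matrix $A=L^{\mathsf T}L$ of rank at most $k$ satisfying
\[
\lambda\le\ip{A}{v v^{\mathsf T}}\le(1+\varepsilon)\lambda\qquad\text{for all }v\in T.
\]
I would start from $A_0=I_d$, which has rank $d$ and zero error. I would then halve the rank repeatedly, going from rank $d_j=d2^{-j}$ to rank $d_{j+1}$, until the rank is about $k$. In one step, write $A_j=\sum_{i\le d_j}u_iu_i^{\mathsf T}$ and choose signs or fractional weights $s_i\in[-1,1]$ by a partial-coloring lemma (Spencer/Gluskin, or a Lovett--Meka walk). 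The weights must keep every constraint $\sum_i s_i\ip{u_i}{v}^2$ small and fix at least half of the $s_i$ at $\pm1$. Then $A_{j+1}=\sum_i(1+s_i)u_iu_i^{\mathsf T}$, re-diagonalized, has rank at most about $d_j/2$ and trace equal to that of $A_j$.

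With $N=n^2$ constraints on $d_j$ variables, partial coloring gives error $O\bigl(\sqrt{d_j\log(N/d_j)}\bigr)$ in absolute terms. Relative to the scale $\operatorname{tr}A_j/d_j$, this is a relative error
\[
\eta_j=O\Bigl(\sqrt{\log(n^2/d_j)/d_j}\Bigr).
\]
The total distortion is then $\prod_j(1+\eta_j)$. Since $\eta_j$ grows geometrically in $j$, the product is dominated by the final step, giving
\[
\varepsilon\approx\sqrt{\log(n^2/k)/k}.
\]
The point of the scheme is that the logarithm is $\log(n/k)$ rather than $\log n$. For $n\le\varepsilon^{-4}$ and $k\approx\varepsilon^{-2}\log(2+\varepsilon^2 n)$ one has
\[
\log(n/k)\lesssim\log(2+\varepsilon^2 n),
\]
which is exactly what is needed.

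The main obstacle is the constraint count in each halving step. It must be a quantity like $n$ relative to $d_j$, not $n^2$, because with $n^2$ the bound $\log(n^2/k)$ is not comparable to $\log(\varepsilon^2 n)$ when $n\approx\varepsilon^{-2}$. The first thing I would try is to replace the pairwise constraints by one matrix-discrepancy condition. Concretely, I would control $\sum_i s_i\,(B^{1/2}u_i)(B^{1/2}u_i)^{\mathsf T}$ in operator norm, where $B$ is a suitable weighted combination of the $vv^{\mathsf T}$, using matrix Spencer or Kadison--Singer-type bounds. The hope is that the effective number of constraints becomes the number $n$ of points rather than the number of pairs; to get there I would use the Gram structure, writing $\|x-y\|^2=\|x\|^2+\|y\|^2-2\ip xy$, so that it suffices to control the $O(n)$ quadratic forms coming from the points themselves at an appropriate relative scale. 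If this step fails, the fallback is to perform the first few halvings by Gaussian projection. That is affordable while $d_j\gg\varepsilon^{-2}\log n$, and it leaves the discrepancy argument only for the last $O(\log\log n)$ levels, where the loss is controlled.
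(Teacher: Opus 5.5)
Your outer architecture (PSD reformulation, repeated rank halving starting from the identity, total error dominated by the last level) matches the paper's. However, there is a genuine gap exactly where you locate ``the main obstacle'', and nothing in the proposal closes it.

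Write $A_j=UU^{\mathsf T}$ and $S=\operatorname{diag}(s_i)$. Your update $U(I+S)U^{\mathsf T}$ equals $A_j^{1/2}(I+Q)A_j^{1/2}$ with $Q$ \emph{diagonal in an orthonormal basis of $\operatorname{range}(A_j)$ fixed before the coloring}. So you have only $d_j$ parameters against $\binom n2$ constraints, and the partial-coloring count gives only $\eta_j\asymp\sqrt{\log(n^2/d_j)/d_j}$. At the last level this is $\sqrt{\log(n^2/k)/k}$. For example, when $\varepsilon^2n\asymp\log n$ it is $\lesssim\varepsilon$ only when $k\gtrsim n$, which is no gain over the trivial bound.

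The proposed repairs do not work:
\begin{itemize}
\item Controlling the $n$ forms $x^{\mathsf T}Ax$ says nothing about the $\binom n2$ cross terms $x^{\mathsf T}Ay$. Moreover, additive control at scale $\norm{x}^2+\norm{y}^2$ does not yield multiplicative control of $\norm{x-y}^2$, which may be far smaller.
\item The matrix Spencer/Kadison--Singer idea is only a hope. You give no mechanism turning an operator-norm bound into relative control of each $v_{xy}^{\mathsf T}Av_{xy}$.
\item The Gaussian fallback only replaces the early levels, whose errors are already negligible in your geometric product. The final $O(\log\log n)$ levels, where the $\log(n^2/k)$ loss sits, are untouched.
\end{itemize}

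There are also two smaller points. First, partial coloring fixes coordinates at $\pm1$, but only the $-1$'s lower the rank, so you must flip $s\mapsto-s$ and accept a factor like $3/4$ per round. Second, your conversion to relative error ``at scale $\tr A_j/d_j$'' tacitly assumes $\max_i\ip{u_i}{v}^2\lesssim v^{\mathsf T}A_jv/d_j$ for every constraint $v$. This fails when $A_j^{1/2}v$ is nearly parallel to one of the basis directions underlying your decomposition.

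The missing idea is to enlarge the number of variables instead of shrinking the number of constraints. The paper uses $M'=M^{1/2}(I+\sigma Q)M^{1/2}$ with $Q$ an \emph{arbitrary} symmetric contraction on $\operatorname{range}(M)\cong\R^k$. Then $u_\ell^{\mathsf T}M'u_\ell=u_\ell^{\mathsf T}Mu_\ell\,(1+\sigma w_\ell^{\mathsf T}Qw_\ell)$ with unit vectors $w_\ell=M^{1/2}u_\ell/\norm{M^{1/2}u_\ell}$.

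It builds $Q$ by a Lovett--Meka walk: Brownian motion in $(\Sym_k,\langle\cdot,\cdot\rangle_F)$, freezing eigendirections that reach $\pm1$ and forms $w_\ell^{\mathsf T}Q_tw_\ell$ that reach $\pm b$. Until $\lceil k/4\rceil$ eigenvalues are frozen, the walk moves in $\Theta(k^2)$ directions, and each frozen form removes only one. The It\^o isometry and $\|Q_t\|_F^2\le k$ then show that by time $T\asymp1/k$, with constant probability, either $\lceil k/4\rceil$ eigenvalues are at $\pm1$ or $\Theta(k^2)$ forms have saturated.

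Each $w_\ell^{\mathsf T}Q_tw_\ell$ is a martingale with quadratic variation at most $t$, however $w_\ell$ is positioned; this also disposes of the incoherence issue. So a Markov bound on the number of saturated forms needs only $me^{-ckb^2}\lesssim k^2$, i.e.\ $b\asymp\sqrt{\log(2+m/k^2)/k}$. With $m\asymp n^2$ this is $\sqrt{\log(2+n/k)/k}$. That is exactly the ``$n$ relative to $d_j$'' count you wanted, obtained by squaring the dimension of the parameter space rather than by taking a square root of the number of constraints.
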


\paragraph{\bf Remark.}
{(i)} The conjecture of Larsen and Nelson allows an arbitrary, possibly
nonlinear, map defined only on $X$. In contrast,
\Cref{thm:jl-main} gives an embedding induced by a linear operator on
the ambient space. Moreover, the matrix walk in the proof can be discretized, along the lines of the Lovett--Meka edge-walk \cite{LM}, to give an efficient randomized procedure for constructing the
operator.

\smallskip

{(ii)} For brevity, write
\begin{equation}\label{eq:dimension-function}
 \Dfun{n}{\varepsilon}
 =
 \varepsilon^{-2}\log(2+\varepsilon^2n).
\end{equation}
The improvement over the previously known upper bound can be substantial:
if $d\geq n$ and $\varepsilon^2n\asymp\log n$, then
\[
 \min\{n,\varepsilon^{-2}\log n\}\asymp n,
 \qquad
 \Dfun{n}{\varepsilon}
 \asymp
 \frac{n\log\log n}{\log n}.
\]

\smallskip

{(iii)} Alon and Klartag~\cite[Theorem~1.2]{AK} proved a
bipartite analogue of \cref{thm:jl-main}. Given vectors
\(a_1,\ldots,a_n,b_1,\ldots,b_n\) in the Euclidean unit ball of $\R^n$, they
construct vectors
\[
 x_1,\ldots,x_n,y_1,\ldots,y_n
 \in \R^{O(\Dfun{n}{\varepsilon})}
\]
such that
\[
 \abs{\ip{x_i}{y_j}-\ip{a_i}{b_j}}
 \leq \varepsilon
 \qquad (i,j\in[n]).
\]
They further conjectured \cite[Conjecture~2.4]{AK} that the output vectors $x_1,\dots,x_n, y_1,\dots,y_n$ in their theorem may be chosen to have norm bounded by a universal constant and proved this conjecture in the regime $\varepsilon = \Theta(1/\sqrt{n})$ \cite[Theorem~2.5]{AK}. We note that our main result \Cref{thm:jl-main}, applied to $\{0,a_1,\dots,a_n, b_1,\dots, b_n\}$, confirms their conjecture.  

In terms of techniques, the bipartite theorem~\cite[Theorem~1.2]{AK} is obtained by using the Khatri--Šidák inequality, Gaussian isoperimetry, and the low-$M^*$ estimate. More relevant to our proof, \cite[Theorem~2.5]{AK} is a dimension-halving statement, proved using
Khatri--Šidák together with the finite-volume ratio theorem. However,
as discussed in~\cite[Section~7]{AK}, this dimension-halving step
cannot be iterated to prove~\cite[Conjecture~2.4]{AK}: the norms of the output vectors $x_1,\dots,x_n, y_1,\dots, y_n$ may increase by a fixed constant, so repeated applications can cause the accumulated error to grow too
quickly. Our dimension-halving step (\Cref{thm:psd}), whose proof is inspired by techniques in algorithmic discrepancy, avoids this deterioration and can therefore be repeatedly applied, with the errors from successive halvings forming a geometrically summable sequence.

\subsection{Rank reduction}
A useful equivalent formulation of linear dimension reduction is as a
positive-semidefinite rank-reduction problem. This formulation appears
explicitly, for instance, in work of So, Ye, and Zhang \cite{SYZ}, who
treat the Johnson--Lindenstrauss problem as a special case of approximate
semidefinite rank reduction.

Indeed, let $u_1,\ldots,u_m\in\R^k$ be unit vectors. If
$L:\R^k\to\R^s$ is linear and
\[
 M=L^{\mathsf T}L,
\]
then $M\succeq0$, $\rank M\leq s$, and
\[
 \norm{Lu_\ell}^2=u_\ell^{\mathsf T}Mu_\ell
\]
for every $\ell\leq m$. Conversely, every positive-semidefinite matrix
$M$ of rank at most $s$ can be written in this form. Thus, on taking
the $u_\ell$ to be the normalized nonzero pairwise differences of a
point configuration, linear dimension reduction amounts to finding a
low-rank positive-semidefinite matrix satisfying
\[
 u_\ell^{\mathsf T}Mu_\ell\approx1
 \qquad (\ell\leq m).
\]

Our main new ingredient is the following rank-halving step.

\begin{theorem}
\label{thm:psd}
Let $r,m\geq1$ and let $u_1,\ldots,u_m\in\R^{2r}$ be unit vectors. There is
a matrix $M\succeq0$ such that
\[
 \rank M\leq r
\]
and
\[
 \max_{\ell\leq m}
 \abs{u_\ell^{\mathsf T}Mu_\ell-1}
 \leq
 C\sqrt{\frac{\log(2+m/r^2)}{r}},
\]
where $C>0$ is a universal constant.
\end{theorem}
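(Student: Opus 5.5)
Start from $M_0 = I_{2r}$, which satisfies $u_\ell^{\mathsf T}M_0u_\ell=1$ exactly but has rank $2r$. The goal is to deform it within the PSD cone, in the spirit of the Lovett--Meka edge walk, until its rank is at most $r$, while keeping every quadratic form $u_\ell^{\mathsf T}Mu_\ell$ close to $1$. A natural parametrization is $M=\sum_{i=1}^{2r}x_i\,v_iv_i^{\mathsf T}$ with $x\in[0,2]^{2r}$, where the orthonormal basis $(v_i)$ is chosen up front. The task then becomes a discrepancy-type problem: starting from $x=\mathbf 1$, walk to a point with at least $r$ coordinates frozen at $0$. The constraints are $\abs{\sum_i x_i\ip{u_\ell}{v_i}^2-1}\le\delta$. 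Writing $a_{\ell i}=\ip{u_\ell}{v_i}^2$, each constraint row satisfies $\sum_i a_{\ell i}=1$, so that $\|a_\ell\|_2\le \max_i a_{\ell i}^{1/2}$.

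A fixed basis is too weak, since $a_\ell$ can be concentrated on one coordinate. I would therefore make the walk truly matrix-valued. Run a Brownian motion $dM_t=P_t\,dG_t\,P_t$, where $G_t$ is a symmetric Gaussian matrix and $P_t$ projects onto a subspace $V_t$. This subspace is orthogonal to the eigendirections already killed, orthogonal to the currently "dangerous" directions $u_\ell u_\ell^{\mathsf T}$ (those with $\abs{u_\ell^{\mathsf T}M_tu_\ell-1}$ near $\delta$), and orthogonal to a few top eigenvectors so that $M_t$ stays bounded. When an eigenvalue hits $0$ it is frozen. The dimension of the space of symmetric matrices on the live $k$-dimensional subspace is about $k^2/2$, which is much larger than the $O(r^2)$ linear constraints we are allowed to impose. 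This quadratic count is what produces $m/r^2$ inside the logarithm, in place of the $m/r$ one would get from diagonal walks.

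The key steps, in order, are the following.
\begin{enumerate}
\item Show the increment of $u_\ell^{\mathsf T}M_tu_\ell$ is a martingale whose quadratic variation is $\|P_tu_\ell\|^4$-type, hence at most $O(1)$ per unit time, scaled by step size.
\item Set the threshold $\delta = C\sqrt{\log(2+m/r^2)/r}$ and time horizon $T\asymp 1/r$ (the eigenvalues start at $1$ and each moves with variance $\asymp dt$ in the $k^2$-dimensional walk). A constraint becomes dangerous with probability roughly $\exp(-\delta^2/(T\cdot O(1)))\le (r^2/m)^{c}$.
\item Conclude that the expected number of dangerous constraints is at most $c' r^2$, so freezing them still leaves a subspace of symmetric matrices of dimension $\gtrsim r^2$.
\item Argue as in Lovett--Meka that the energy $\E\tr(M_T^2)$ or $\E\sum_i(\lambda_i-1)^2$ grows linearly while the eigenvalues are confined to $[0,2]$. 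This forces $\Omega(k)$ eigenvalues to hit $0$ (or $2$) within the horizon.
\item Iterate with the live dimension $k$ halving. The errors form a geometric sum and the rank ends at $\le r$.
\end{enumerate}
Eigenvalues hitting $2$ are harmless for the rank but must be controlled. To push towards $0$ rather than $2$, I would use a drift $-\eta P_t$ that keeps the trace from increasing. This is permitted as long as the drift's effect on $u_\ell^{\mathsf T}Mu_\ell$ is compensated, e.g. by projecting the drift to be orthogonal to the constraints too.

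The main obstacle is step 4 combined with the drift: guaranteeing that at least $r$ eigenvalues actually reach $0$, rather than just moving, while the trace (which for unit $u_\ell$ ties to the averages of the constraints) is nearly conserved. With all eigenvalues in $[0,2]$ and trace near $2r$, half of them reaching $0$ forces the rest near $2$. That is consistent with the target $M\approx 2\Pi$ for a rank-$r$ projection $\Pi$, but the walk must be steered there. I would first try a potential-function argument on $\sum_i \lambda_i(2-\lambda_i)$, which decreases at rate proportional to the walk's dimension, to show that eigenvalues polarize to $\{0,2\}$ in time $O(1)$.
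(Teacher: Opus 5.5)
Your walk matches the paper's walk for a single round. The ingredients are the same: Brownian motion in symmetric matrices on the live subspace, freezing boundary eigendirections and barrier-hitting quadratic forms, the tail bound $m(2+m/r^2)^{-c}\lesssim r^2$ on frozen constraints, and the energy bound forcing $\Omega(k)$ eigenvalues to the boundary by time $O(1/k)$.

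The gap is exactly the obstacle you flag at the end, and your proposed remedies do not close it.
\begin{itemize}
\item A drift made orthogonal to all constraint matrices $u_\ell u_\ell^{\mathsf T}$ can vanish identically, since for generic $u_\ell$ these span $\Sym_{2r}$ once $m\geq r(2r+1)$.
\item A trace-lowering drift can be incompatible with the constraints. If the $u_\ell$ contain an orthonormal basis, then $\tr M$ is pinned to $[2r(1-\delta),2r(1+\delta)]$.
\item The potential $\sum_i\lambda_i(2-\lambda_i)$ only measures polarization toward $\{0,2\}$, which your energy argument already gives. It says nothing about how many eigenvalues end at $0$.
\item Running that potential argument for time $O(1)$ abandons the horizon $T\asymp1/r$ under which you controlled the quadratic forms.
\end{itemize}
You also keep directions frozen at $2$ frozen permanently while shrinking the live dimension. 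So nothing prevents the process from ending with fewer than $r$ zero eigenvalues. (No geometric sum is needed inside this theorem: going from $2r$ to $r$ takes $O(1)$ rounds of comparable error. The geometric summation belongs to \Cref{thm:jl-main}.)

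The missing idea is a sign symmetry combined with a multiplicative restart. In round $j$, run your walk to get a contraction $Q_j$ on $\operatorname{range}(M_j)$ with barriers at $\pm1$. Test it against the renormalized unit vectors $w_{\ell,j}=M_j^{1/2}u_\ell/(u_\ell^{\mathsf T}M_ju_\ell)^{1/2}$, and set $M_{j+1}=M_j^{1/2}(I+\sigma_jQ_j)M_j^{1/2}$. This has three consequences.
\begin{itemize}
\item $M_{j+1}$ is positive semidefinite for either sign $\sigma_j$.
\item $u_\ell^{\mathsf T}M_{j+1}u_\ell=u_\ell^{\mathsf T}M_ju_\ell\,(1+\sigma_jw_{\ell,j}^{\mathsf T}Q_jw_{\ell,j})$, so the barrier bound does not depend on the sign. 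You may therefore choose $\sigma_j$ after the walk, turning whichever of the eigenvalues $\pm1$ is more frequent (at least $k_j/8$ of them) into zeros. No steering is needed.
\item The directions sent to eigenvalue $2$ are not frozen afterwards. The next round walks on all of $\operatorname{range}(M_{j+1})$ and can kill them later.
\end{itemize}
Hence $\rank M_{j+1}\leq 7k_j/8$, and at most six rounds reach rank $r$. Each round has relative error at most $\eta\asymp\sqrt{\log(3+m/r^2)/r}$ because $k_j>r$, so the total error is $O(\eta)$.

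In your additive picture, the reflection $M_T\mapsto 2M_0-M_T$ is available in the first round, where $M_0=I$. Afterwards it preserves positive semidefiniteness if and only if the perturbation has the form $M_0^{1/2}QM_0^{1/2}$ with $-I\preceq Q\preceq I$. That is precisely the multiplicative parametrization.
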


\paragraph{\bf Remark.}
For comparison, let $G$ be an $r\times 2r$ matrix with independent
$N(0,1/r)$ entries and set $M=G^{\mathsf T}G$. For every fixed unit
vector $u\in\R^{2r}$, the random variable $\norm{Gu}^2$ has distribution
$\chi_r^2/r$, and hence
\[
 \Prob\left\{
 \abs{\norm{Gu}^2-1}>t
 \right\}
 \leq
 2\exp(-crt^2),
 \qquad 0<t<1,
\]
for a universal constant $c>0$. Thus, in order to control $m$
prescribed unit vectors simultaneously by a union bound, one takes
\[
 t\asymp
 \sqrt{\frac{\log(2+m)}{r}}.
\]
In the regime $r\gtrsim\log(2+m)$, this gives a rank-$r$ matrix with error
\[
 O\left(\sqrt{\frac{\log(2+m)}{r}}\right).
\]
The gain in \Cref{thm:psd} is the replacement of $m$ by $m/r^2$ inside
the logarithm. In particular, for the $m\asymp n^2$ pairwise differences
of an $n$-point configuration, the logarithmic term is of order
$\log(2+n/r)$ rather than $\log n$. 

\medskip

The construction of $M$ uses ideas from algorithmic discrepancy theory. Briefly, we perform an adaptive random walk in the space of symmetric matrices
while controlling the prescribed quadratic forms. We describe the
construction and the subsequent iteration in the proof overview (see \Cref{sec:overview}).

\subsection{Sketching inner products}

Let $X=(x_1,\ldots,x_n)$ be an indexed configuration in the Euclidean
unit ball of $\R^k$.  Following Alon and Klartag \cite{AK}, an
$\varepsilon$-inner-product sketch for $X$ is a data structure from which, given
$i,j\in[n]$, one can recover $\ip{x_i}{x_j}$ up to additive error $\varepsilon$.  Let
$f(n,k,\varepsilon)$ denote the minimum number of bits required in the
worst case.  

For
\[
 \varepsilon\geq {\frac{2}{\sqrt{n}}},
 \qquad
 t=\Dfun{n}{\varepsilon}
 =\frac{\log(2+\varepsilon^2n)}{\varepsilon^2},
\]
Alon and Klartag proved \cite[Theorem~2.2]{AK} that
\[
 \Omega(nt)
 \leq
 f(n,k,\varepsilon)
 \leq
 O\left(\frac{n\log n}{\varepsilon^2}\right)
 \qquad
 \text{for } t\leq k\leq n.
\]
In the remaining regimes for $k$, they obtained matching upper and lower bounds. In particular, at the threshold dimension $k= \lceil t \rceil$,
their result gives
\[
 f(n, \lceil t \rceil,\varepsilon)=\Theta(nt).
\]
They conjectured that the lower bound $\Omega(nt)$ is also sharp throughout the regime $k\geq t$.

It is useful to explain why this does not follow immediately from
dimension reduction. Suppose that one first embeds the configuration
into the Euclidean unit ball of $\R^s$ and then rounds each vector to an
$\varepsilon$-net. Since such a net may be chosen to have cardinality at
most
$
 \left(\frac{C}{\varepsilon}\right)^s,
$
this gives a sketch of length
$
 O\left(ns\log(1/\varepsilon)\right).
$
Even if one could first reduce to the target dimension $s\asymp t$, this
procedure would therefore use
\[
 O\left(nt\log(1/\varepsilon)\right)
\]
bits, rather than the conjectured order $nt$. 

Instead, Alon and Klartag's upper bounds in the lower-dimensional regimes for $k$ are
based on an $\ell_\infty$-covering estimate for Gram matrices. In particular, allowing fixed constant factors in the accuracy and
dimension, their estimates give an $(\varepsilon/2)$-inner-product
sketch of length $O(nt)$ for configurations in dimension $O(t)$.
The remaining task in the high-dimensional regime is therefore to
reduce an arbitrary configuration to this dimension while approximately
preserving inner products. Our main theorem provides precisely this
reduction and consequently gives the following.

\begin{theorem}
\label{thm:sketch-main}
Let $n\geq2$ and suppose that
\[
 \frac{2}{\sqrt n}\leq\varepsilon\leq\frac12,
 \qquad
 \Dfun{n}{\varepsilon}\leq k\leq n.
\]
Then
\[
 f(n,k,\varepsilon)
 =
 \Theta\left(
 n\Dfun{n}{\varepsilon}
 \right).
\]
\end{theorem}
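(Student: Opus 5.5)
The lower bound $f(n,k,\varepsilon)=\Omega(n\Dfun{n}{\varepsilon})$ for $k\geq \Dfun{n}{\varepsilon}$ is already contained in \cite[Theorem~2.2]{AK}, so only the upper bound $O(n\Dfun{n}{\varepsilon})$ needs proof. Since every $\varepsilon$-sketch for configurations in $\R^k$ with $k\leq n$ is also one after embedding $\R^k\subseteq\R^n$, it suffices to bound $f(n,n,\varepsilon)$. The plan is a two-stage encoding. First, apply \Cref{thm:jl-main} to reduce dimension while preserving inner products up to $\varepsilon/2$. Then invoke the Alon--Klartag low-dimensional sketch (their $\ell_\infty$-covering estimate for Gram matrices) in the reduced dimension to get accuracy $\varepsilon/2$ with $O(nt)$ bits, where $t=\Dfun{n}{\varepsilon}$.

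For the reduction step, let $x_1,\dots,x_n$ lie in the unit ball of $\R^k$. Apply \Cref{thm:jl-main} to the $(n+1)$-point set $X=\{0,x_1,\dots,x_n\}$ with distortion parameter $\varepsilon'=c\varepsilon$ for a small absolute constant $c$. This gives a linear $L$, a scale $\lambda>0$ and $r\leq C\varepsilon'^{-2}\log(2+\varepsilon'^2(n+1))\leq C'\,t$ with $\lambda\norm{x-y}\leq\norm{Lx-Ly}\leq(1+\varepsilon')\lambda\norm{x-y}$ on $X$. Put $y_i=\lambda^{-1}(1+\varepsilon')^{-1}Lx_i$. Including $0$ in $X$ guarantees $\norm{y_i}\leq\norm{x_i}\leq1$, so the $y_i$ lie in the unit ball of $\R^r$. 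It also gives $\norm{y_i}^2=\norm{x_i}^2+O(\varepsilon')$ and $\norm{y_i-y_j}^2=\norm{x_i-x_j}^2+O(\varepsilon')$, using $\norm{x_i-x_j}\leq2$. By polarization, $\ip{y_i}{y_j}=\tfrac12(\norm{y_i}^2+\norm{y_j}^2-\norm{y_i-y_j}^2)$, so every inner product is preserved up to $O(\varepsilon')\leq\varepsilon/2$ for $c$ small. Note that $\log(2+c^2\varepsilon^2 n)\asymp\log(2+\varepsilon^2n)$, so $r=O(t)$.

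For the encoding step, apply \cite[Theorem~2.2]{AK} in dimension $r$ with accuracy $\varepsilon/2$, where $r\leq C't$. Constant changes in the accuracy and dimension affect the bit count only by constant factors, since $\Dfun{n}{\varepsilon/2}\asymp\Dfun{n}{\varepsilon}$ and the cost is monotone in dimension (pad with zeros to reach dimension $\lceil C't\rceil$). One technical point is that $\lceil C't\rceil$ may exceed $n$ or leave the range $[\,\cdot\,]$ where their bound is stated. In that case I would either use the simpler bound that their covering argument gives, $O(n\,s)$ bits for dimension $s\asymp t$, or observe that for $t\gtrsim n$ the trivial isometric reduction to dimension $n-1$ applies. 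Either way the sketch has $O(nt)$ bits and recovers $\ip{y_i}{y_j}$ within $\varepsilon/2$, hence $\ip{x_i}{x_j}$ within $\varepsilon$.

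The main obstacle is bookkeeping rather than new ideas. The key subtlety is keeping the reduced vectors in the unit ball, which is handled by adding the origin and rescaling by $(1+\varepsilon')^{-1}\lambda^{-1}$. The other is checking that the Alon--Klartag low-dimensional estimate applies at dimension a constant multiple of $t$ with accuracy $\varepsilon/2$, which is the regime their matching bounds cover.
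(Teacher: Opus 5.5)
Your proposal is correct and follows essentially the same route as the paper. You apply \Cref{thm:jl-main} with distortion a small constant multiple of $\varepsilon$ (the paper uses $\varepsilon/12$) to $\{0,x_1,\ldots,x_n\}$, rescale so the map is a contraction on this set (which keeps the images in the unit ball), obtain $\varepsilon/2$-accuracy of inner products by polarization, and then invoke the Alon--Klartag low-dimensional sketch in dimension $O(\Dfun{n}{\varepsilon})$.

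On the dimension-range issue you flag, the paper relies on exactly your first fix: the covering estimate gives an $O(nt)$-bit $(\varepsilon/2)$-sketch in any dimension $O(t)$. Your second fix would not suffice on its own. Since inner products are measured from the origin, you must reduce to the linear span, which has dimension at most $n$ (not $n-1$). You would then still need an $O(nt)$ bound at $k\asymp n\asymp t$, and the general estimate $O(n\log n/\varepsilon^2)$ does not provide one.
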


\paragraph{\bf Remark.}
Together with \cite[Theorem~2.2]{AK},
\Cref{thm:sketch-main} determines $f(n,k,\varepsilon)$ up to universal
constant factors throughout the nontrivial parameter range $1\leq k \leq n$ and $2/\sqrt{n} \leq \varepsilon \leq 1/2$. 

\begin{proof}[Proof of \Cref{thm:sketch-main}]
The lower bound is \cite[Theorem~2.2]{AK}. For the upper bound, apply
\Cref{thm:jl-main} with distortion parameter $\varepsilon/12$ to
$\{0,x_1,\ldots,x_n\}$. After rescaling the resulting linear map, we
obtain $L:\R^k\to\R^s$, where
\[
 s=O(\Dfun{n}{\varepsilon}),
\]
such that
\[
 (1+\varepsilon/12)^{-1}\norm{u-v}
 \leq \norm{Lu-Lv}
 \leq \norm{u-v}
\]
for all $u,v\in\{0,x_1,\ldots,x_n\}$. Set $z_i=Lx_i$. The vectors $z_i$
lie in the Euclidean unit ball, and polarization gives
\[
 2\abs{\ip{z_i}{z_j}-\ip{x_i}{x_j}}
 \leq
 \left(1-(1+\varepsilon/12)^{-2}\right)
 \left(
 \norm{x_i}^2+\norm{x_j}^2+\norm{x_i-x_j}^2
 \right) \leq\varepsilon.
\]
Thus the inner products of the $z_i$ approximate those of the $x_i$ to
within $\varepsilon/2$. By the lower-dimensional estimate in \cite[Theorem~2.2]{AK} described
above, $(z_1,\ldots,z_n)$ has an $(\varepsilon/2)$-inner-product sketch
using $O(n\Dfun{n}{\varepsilon})$ bits. The same sketch therefore
approximates the inner products of the original configuration to within
$\varepsilon$.
\end{proof}

\subsection{Proof overview}
\label{sec:overview}

The main theorem follows from \Cref{thm:psd} by iterating the
rank-halving step, so we focus on the proof of \Cref{thm:psd}.

We begin from $M=I_{2r}$ and repeatedly modify $M$ so as to decrease its
rank while changing each prescribed quadratic form only by a small
relative amount. Suppose that at some stage $M\succeq0$ has rank $k>r$.
We seek a new positive-semidefinite matrix $M'$ whose rank is smaller
than that of $M$, but for which
$
 u_\ell^{\mathsf T}M'u_\ell
$
remains close to $u_\ell^{\mathsf T}Mu_\ell$ for every $\ell\leq m$.

Choose a symmetric operator $Q$ on the $k$-dimensional space
$\operatorname{range}(M)$ and consider the update
\[
 M'
 =
 M^{1/2}(I+\sigma Q)M^{1/2},
 \qquad \sigma\in\{-1,1\}.
\]
For $M'$ to remain positive semidefinite, it is
enough to require
\[
 -I\preceq Q\preceq I.
\]
To decrease the rank by a fixed proportion, we would like
$I+\sigma Q$ to have a large kernel. Thus it is enough for $Q$ to have
many eigenvalues equal to $1$ or $-1$, since one choice of $\sigma$
then turns a fixed proportion of them into zero eigenvalues of
$I+\sigma Q$. It remains to understand what this update does to the prescribed
quadratic forms. Setting
\[
 w_\ell
 =
 \frac{M^{1/2}u_\ell}
 {\sqrt{u_\ell^{\mathsf T}Mu_\ell}},
\]
we obtain
\[
 u_\ell^{\mathsf T}M'u_\ell
 =
 u_\ell^{\mathsf T}Mu_\ell
 \left(1+\sigma w_\ell^{\mathsf T}Qw_\ell\right).
\]
Thus relative preservation of all the quadratic forms is guaranteed if
the quantities $w_\ell^{\mathsf T}Qw_\ell$ are uniformly small. The
rank-halving step is therefore reduced to constructing a symmetric
contraction $Q$ with many eigenvalues equal to $\pm1$ and with small
quadratic forms on a prescribed family of unit vectors.

We construct $Q$ using a matrix version of the partial-coloring
argument of Lovett and Meka \cite{LM} from their constructive proof of
Spencer's theorem in discrepancy theory. Starting from $Q_0=0$, we run
Brownian motion in the space of symmetric matrices. Once an eigenvalue
reaches $1$ or $-1$, future increments are chosen to leave the
corresponding eigenspace fixed. The idea of freezing eigendirections at the boundary and continuing on the remaining subspace is classical in semidefinite rank reduction \cite{Barvinok,Pataki}. For matrices satisfying $0\preceq X\preceq I$, an analogous two-sided construction is used in the SDP iterative-rounding framework of Tantipongpipat et al.~\cite{TSSMV19}, where both the $0$- and $1$-eigenspaces are frozen. In our construction, once $\lvert w_\ell^{\mathsf T}Q_tw_\ell\rvert$ reaches a prescribed level $b$, future increments are also required to leave this quadratic form fixed.

Before $k/4$ eigenvalues have reached $\pm1$, the remaining eigenspace
has dimension at least $3k/4$, so the symmetric matrices supported there
form a space of dimension $\Theta(k^2)$. Each fixed quadratic form
imposes only one additional linear constraint. Hence, until $k^2/16$
quadratic forms have been fixed, the process still moves in
$\Omega(k^2)$ directions and its expected
squared Frobenius norm accumulates at rate $\Omega(k^2)$. Since $Q_t$ is a
contraction, we always have $\|Q_t\|_F^2\leq k$. Thus, if $p$ denotes
the probability that neither threshold is reached by time $T$, then the
It\^o isometry gives
\[
 k
 \geq
 \E\|Q_T\|_F^2
 \gtrsim
 p\,k^2T.
\]
Taking $T=C/k$ with $C$ sufficiently large therefore makes $p$ a small
constant.

On the other hand, each process $w_\ell^{\mathsf T}Q_tw_\ell$ is a
one-dimensional martingale whose quadratic variation grows at rate at
most one. Hence, over the time interval $[0,T]$, where $T=C/k$, the
probability that it reaches magnitude $b$ is at most
\[
 \exp(-c k b^2).
\]
There are $m$ quadratic forms, so the expected number that reach the
barrier by time $T$ is at most
\[
 m\exp(-c k b^2).
\]
To ensure that fewer than $k^2/16$ quadratic forms are fixed with
constant probability, it therefore suffices, by Markov's inequality, to
choose $b$ so that
\[
 m\exp(-c k b^2)\lesssim k^2.
\]
This gives
\[
 b
 \asymp
 \sqrt{\frac{\log(2+m/k^2)}{k}},
\]
which is the key quantitative gain over the bound
$\sqrt{\log(2+m)/k}$ coming from a random Gaussian matrix. Indeed, for
an $n$-point configuration we have $m\asymp n^2$, so a rank-halving
step from dimension $2r$ to $r$ incurs error
\[
 O\left(
 \sqrt{\frac{\log(2+n/r)}{r}}
 \right).
\]
Iterating down to a target dimension $r_0$ gives a geometrically
summable sequence of errors, with total error bounded by
\[
 O\left(
 \sqrt{\frac{\log(2+n/r_0)}{r_0}}
 \right).
\]
Taking $r_0\asymp\Dfun{n}{\varepsilon}$ makes this $O(\varepsilon)$
and yields \Cref{thm:jl-main}.

\subsection{Notation}
All logarithms are natural.  For a positive integer $N$, write
$[N]=\{1,\ldots,N\}$.  For $x,y\in\R^k$, $\norm{x}$ and $\ip{x}{y}$ denote the Euclidean norm
and inner product, and
$S^{k-1}=\{x\in\R^k:\norm{x}=1\}$.  Let $\Sym_k$ be the space of real
symmetric $k\times k$ matrices.  For $A,B\in\Sym_k$, the notation
$A\preceq B$ means that $B-A$ is positive semidefinite; in particular,
$A\succeq0$ means that $A$ is positive semidefinite.  We use
\[
 \langle A,B\rangle_F=\tr(AB),
 \qquad
 \|A\|_F=\langle A,A\rangle_F^{1/2}
\]
for the Frobenius inner product and norm.  Eigenvalues are counted with
multiplicity, and $P_F$ denotes the orthogonal projection onto $F$.
For an operator on a
finite-dimensional inner-product space $H$, $\|\cdot\|_{\mathrm{HS}(H)}$
denotes the Hilbert--Schmidt norm.  For a continuous local martingale
$Z$, $[Z]_t$ denotes its quadratic variation, and $\chi_r^2$ denotes the
chi-square distribution with $r$ degrees of freedom.

We will also make use of asymptotic notation. For functions $f,g$, $f = O_{\alpha}(g)$ (or $f\lesssim_{\alpha} g$) means that $f \le C_\alpha g$, where $C_\alpha$ is some constant depending on $\alpha$; $f = \Omega_{\alpha}(g)$ (or $f \gtrsim_{\alpha} g$) means that $f \ge c_{\alpha} g$, where $c_\alpha > 0$ is some constant depending on $\alpha$, and $f = \Theta_{\alpha}(g)$ means that both $f = O_{\alpha}(g)$ and $f = \Omega_{\alpha}(g)$ hold. 

\subsection*{Acknowledgments and AI disclosure}
Motivated by \cite[Theorem~2.5]{AK}, the author prompted GPT 5.4 Pro to prove a version of rank-halving which can be iterated to prove the theorem. Although this attempt was unsuccessful, sustained interaction led to the author identifying the PSD rank reduction formulation in \cite{SYZ} and the broader literature on SDP rank reduction as promising candidates. Subsequently, the author prompted GPT 5.6 Sol with the PSD formulation and references to SDP rank reduction, with a directive to find a suitable construction. This led to the model returning the Lovett-Meka inspired construction, which is the basis of the proof in \cref{sec:spectral-proof}. In \cref{sec:overview}, we have noted the similarity of this construction to previous constructions in the SDP rank reduction literature, perhaps optically most closely to the work of Tantipongpipat et al.~\cite{TSSMV19}.  

After the appearance of the first version of this paper on arXiv, we were informed by Or Zamir of a separate ChatGPT interaction, predating our posting, in which GPT-5.6 Sol produced an essentially identical proof. Zamir describes his mathematical input as mostly limited to identifying the desired rank halving objective. In Zamir's interaction, the model itself introduced the PSD formulation and, after generic steering that included no substantive mathematical ideas, found a proof using Gaussian correlation and Gaussian isoperimetry, using the partial coloring framework of Rothvoss \cite{rothvoss2017constructive}. Instructions by Zamir to find a more combinatorial argument led to the model returning the same Lovett-Meka inspired matrix walk. 
 
The author used Codex for assistance with preparing the manuscript. The mathematical content, the final text, and any  errors are the responsibility of the author.  The author is partially supported by NSF grant DMS-2237646.

\section{Dimension reduction by iterated rank halving}
\label{sec:reduction}

In this section, we deduce \Cref{thm:jl-main} from the rank-halving
theorem (\Cref{thm:psd}) stated in the introduction. We will use the following simple corollary of \Cref{thm:psd}. 

\begin{corollary}\label{cor:one-step}
There is a universal constant $C_0>0$ such that the following holds.
Let $n\geq2$ and $r\geq1$. For every finite set
$X\subseteq\R^{2r}$ with $2\leq\abs{X}\leq n$, there is a linear map
$L:\R^{2r}\to\R^r$ such that
\[
 (1-\delta_r)\norm{x-y}^2
 \leq
 \norm{Lx-Ly}^2
 \leq
 (1+\delta_r)\norm{x-y}^2
\]
for all $x,y\in X$, where
\[
 \delta_r
 =
 C_0\sqrt{\frac{\log(2+n/r)}{r}}.
\]
\end{corollary}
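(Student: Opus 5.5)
We apply \Cref{thm:psd} to the normalized pairwise differences of $X$ and then factor the resulting matrix. Let $u_1,\ldots,u_m$ be the unit vectors $(x-y)/\norm{x-y}$ for unordered pairs $\{x,y\}\subseteq X$ with $x\neq y$. Then
\[
m\leq\binom{n}{2}\leq n^2/2 .
\]
\Cref{thm:psd} gives a matrix $M\succeq0$ on $\R^{2r}$ with $\rank M\leq r$ and
\[
\max_\ell\abs{u_\ell^{\mathsf T}Mu_\ell-1}\leq C\sqrt{\log(2+m/r^2)/r}.
\]

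Next we factor $M$ to get the map $L$. Since $M\succeq0$ has rank at most $r$, we can write $M=L^{\mathsf T}L$ with $L:\R^{2r}\to\R^r$. Concretely, take the spectral decomposition $M=\sum_{i\leq r}\lambda_i v_iv_i^{\mathsf T}$, padding with zero eigenvalues if the rank is below $r$, and let $L$ be the matrix with rows $\sqrt{\lambda_i}\,v_i^{\mathsf T}$. Then for $x\neq y$ in $X$,
\[
\norm{Lx-Ly}^2=\norm{x-y}^2\,u^{\mathsf T}Mu ,
\]
where $u$ is the corresponding normalized difference. So the two-sided bound holds with $\delta_r$ equal to the error in \Cref{thm:psd}. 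The case $x=y$ is trivial.

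It remains to compare the logarithms. Since $m/r^2\leq n^2/r^2$, we have
\[
\log(2+m/r^2)\leq\log\bigl(2+(n/r)^2\bigr)\leq\log\bigl((2+n/r)^2\bigr)=2\log(2+n/r).
\]
Hence $C_0=\sqrt2\,C$ works, with $C$ the constant of \Cref{thm:psd}. This is the only step needing any care, and it is elementary. The hypotheses $\abs{X}\geq2$ and $n\geq2$ only ensure $m\geq1$, so that \Cref{thm:psd} applies. Note also that the bound is meaningful only when $\delta_r<1$, but the statement does not require this.
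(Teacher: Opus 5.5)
Your proposal is correct and follows the paper's proof essentially verbatim: apply \Cref{thm:psd} to the at most $\binom n2$ normalized pairwise differences, factor $M=L^{\mathsf T}L$, and absorb the constant via $\log(2+m/r^2)\leq 2\log(2+n/r)$. Your added details, namely the explicit spectral factorization into an $r\times 2r$ matrix and the note that $\abs{X}\geq 2$ guarantees $m\geq1$, are accurate and only make explicit what the paper leaves implicit.
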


\begin{proof}
Write $X=\{x_1,\ldots,x_N\}$, where $N\leq n$, and apply
\Cref{thm:psd} to the unit vectors
\[
 u_{ij}
 =
 \frac{x_i-x_j}{\norm{x_i-x_j}},
 \qquad 1\leq i<j\leq N.
\]
There are $\binom N2\leq\binom n2$ such vectors. Factoring the resulting
matrix as $M=L^{\mathsf T}L$ gives the desired linear map, since
\[
 \frac{\norm{Lx_i-Lx_j}^2}{\norm{x_i-x_j}^2}
 =
 u_{ij}^{\mathsf T}Mu_{ij}.
\]
Finally,
\[
 \log\left(2+\frac{\binom n2}{r^2}\right)
 \leq
 2\log\left(2+\frac nr\right),
\]
so the constant in \Cref{thm:psd} may be absorbed into a universal
constant $C_0$.
\end{proof}

We now present the details of the iteration. 

\begin{proof}[Proof of \Cref{thm:jl-main} given \Cref{thm:psd}]
Translate the configuration by $-x_1$ and write
\[
 \widetilde X=\{x-x_1:x\in X\},
 \qquad
 V=\spann(\widetilde X).
\]
Translation does not change pairwise distances, and
\[
 \dim V\leq\min\{d,n-1\}.
\]
It suffices to construct the desired linear map on $V$, since the map
may then be extended by zero on $V^\perp$.

Let $C_0$ be the constant in \Cref{cor:one-step}, and choose a
sufficiently large universal constant $K$. Set
\[
 r_0=
 \left\lceil
 K\Dfun{n}{\varepsilon}
 \right\rceil.
\]
If $\dim V\leq r_0$, an isometric identification of $V$ with
$\R^{\dim V}$ gives an exact realization. Its target dimension satisfies
\[
 \dim V\leq\min\{d,n-1,r_0\}
 \leq
 C\min\{d,n-1,\Dfun{n}{\varepsilon}\}.
\]
We may therefore assume that $\dim V>r_0$.

Choose $s\geq1$ minimal such that
\[
 \dim V\leq2^sr_0.
\]
For $0\leq j\leq s-1$, set
\[
 \beta_j
 =
 C_0\sqrt{
 \frac{\log(2+n/(2^jr_0))}{2^jr_0}
 },
 \qquad
 S=\sum_{j=0}^{s-1}\beta_j.
\]
Since $j\mapsto\log(2+n/(2^jr_0))$ is decreasing,
\[
 S
 \leq
 C_0\sum_{j=0}^{\infty}2^{-j/2}
 \sqrt{\frac{\log(2+n/r_0)}{r_0}}
 \leq \frac{C\varepsilon}{\sqrt K}.
\]
Indeed, for $K$ larger than a universal constant,
\[
 r_0
 \geq K\varepsilon^{-2}\log(2+\varepsilon^2n)
 \geq \varepsilon^{-2},
\]
and hence
\[
 \log(2+n/r_0)\leq\log(2+\varepsilon^2n).
\]
Choose $K$ sufficiently large that $S\leq\varepsilon/10$. In
particular, every $\beta_j$ lies in $[0,1)$.

Fix an isometric linear embedding
\[
 \iota:V\longrightarrow\R^{2^sr_0}
\]
and set $X_s=\iota(\widetilde X)$. For
$j=s-1,s-2,\ldots,0$, apply \Cref{cor:one-step} to $X_{j+1}$, with
target dimension $2^jr_0$, obtaining a linear map
\[
 L_j:\R^{2^{j+1}r_0}\longrightarrow\R^{2^jr_0},
 \qquad
 X_j=L_j(X_{j+1}).
\]
The squared-distance error at this step is at most $\beta_j$. Note that since
$\beta_j<1$, the map $L_j$ is injective on $X_{j+1}$.

Let
\[
 A=L_0\circ L_1\circ\cdots\circ L_{s-1}\circ\iota.
\]
For every distinct $x,y\in X$, the ratio between the final and initial
squared distances lies between
\[
 \prod_{j=0}^{s-1}(1-\beta_j)
 \quad\text{and}\quad
 \prod_{j=0}^{s-1}(1+\beta_j).
\]
Consequently,
\[
 \prod_{j=0}^{s-1}(1-\beta_j)\geq1-S,
 \qquad
 \prod_{j=0}^{s-1}(1+\beta_j)\leq e^S.
\]
The distortion of $A$ on $\widetilde X$ is therefore at most
\[
 \sqrt{\frac{e^S}{1-S}}
 \leq e^{3S/2}
 \leq1+\varepsilon,
\]
where we used $S\leq\varepsilon/10$ and
$0<\varepsilon<1/2$.

The target dimension of this embedding is $r_0$.  Since $r_0<\dim V$ in the case under
consideration and $r_0\leq C\Dfun{n}{\varepsilon}$,
\[
 r_0\leq
 C\min\left\{
 d,n-1,\Dfun{n}{\varepsilon}
 \right\}
\]
for a universal constant $C$.  Finally, extend $A$ by zero on
$V^\perp$.  For $x,y\in X$, this extension sends $x-y$ to
\[
 A(x-y)=A\bigl((x-x_1)-(y-x_1)\bigr),
\]
so it has the same distortion on the original configuration $X$.
\end{proof}

\section{Proof of the rank-halving theorem}
\label{sec:rank-halving}

The main ingredient in the proof of \Cref{thm:psd} is the following
proposition, which produces a symmetric contraction with many eigenvalues
equal to $1$ or $-1$, while keeping a prescribed family of quadratic
forms small. Its proof is deferred to \Cref{sec:spectral-proof}.

\begin{proposition}\label{prop:spectral}
There is a universal constant $C_1>0$ such that the following holds.
Let $k,m\geq1$ and let $w_1,\ldots,w_m\in S^{k-1}$. Then there exists
$Q\in\Sym_k$ such that
\[
 -I\preceq Q\preceq I,
\]
at least $\lceil k/4\rceil$ eigenvalues of $Q$ are equal to $1$ or
$-1$, and
\[
 \max_{\ell\leq m}
 \abs{w_\ell^{\mathsf T}Qw_\ell}
 \leq
 C_1\sqrt{\frac{\log(3+m/k^2)}{k}}.
\]
\end{proposition}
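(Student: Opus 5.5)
We prove \Cref{prop:spectral} with the matrix Lovett--Meka walk sketched in \Cref{sec:overview}. Set
\[
 b=C_1\sqrt{\log(3+m/k^2)/k}.
\]
If $b\geq1$, take $Q=I$: every eigenvalue equals $1$, and $\abs{w_\ell^{\mathsf T}Qw_\ell}=1\leq b$. So assume $b<1$.

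\textbf{The walk.} Start from $Q_0=0$ and run a continuous-time process $dQ_t=P_t(dB_t)$. Here $B_t$ is standard Brownian motion on $\Sym_k$ with the Frobenius inner product. $P_t$ is the orthogonal projection (in $\Sym_k$) onto the subspace $\mathcal W_t$ of matrices $A$ satisfying two conditions.
\begin{itemize}
\item[(i)] $AE_t=0$, where $E_t$ is the span of the eigenvectors of $Q_t$ with eigenvalue $\pm1$ (the frozen eigenspace).
\item[(ii)] $w_\ell^{\mathsf T}Aw_\ell=0$ for every $\ell$ with $\abs{w_\ell^{\mathsf T}Q_tw_\ell}\geq b$ (the frozen forms).
\end{itemize}
To avoid measure-theoretic issues with discontinuous projections, I would run the discretized version. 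Take Gaussian steps of size $\gamma\ll 1/k^{2}$, and freeze an eigendirection once its eigenvalue lies within $\gamma^{1/3}$ of $\pm1$, then round at the end. Alternatively, define the process piecewise between the stopping times at which a new constraint activates.

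\textbf{Key properties.}
\begin{itemize}
\item Frozen eigenvectors remain eigenvectors: by (i), $A$ maps $E_t^\perp$ to itself and kills $E_t$. Hence $-I\preceq Q_t\preceq I$ throughout, since an eigenvalue must hit $\pm1$ before crossing and is frozen there.
\item Each form $Z_\ell(t)=w_\ell^{\mathsf T}Q_tw_\ell$ is a martingale with $d[Z_\ell]_t=\norm{P_t(w_\ell w_\ell^{\mathsf T})}_F^2\,dt\leq dt$, and it stops moving once it reaches $\pm b$. So $\abs{Z_\ell}\leq b$ for all $t$.
\item If fewer than $k/4$ eigenvalues are frozen and fewer than $k^2/16$ forms are frozen, then
\[
 \dim\mathcal W_t\geq \binom{3k/4+1}{2}-k^2/16\geq c k^2 .
\]
\end{itemize}

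\textbf{Termination.} Fix $T=C/k$ and let $\tau$ be the first time either $\lceil k/4\rceil$ eigenvalues are frozen or $k^2/16$ forms are frozen. By the Itô isometry,
\[
 \E\|Q_{T\wedge\tau}\|_F^2=\E\int_0^{T\wedge\tau}\dim\mathcal W_t\,dt\geq ck^2\,T\,\Prob(\tau>T).
\]
The left side is at most $k$, so $\Prob(\tau>T)\leq 1/(cC)\leq 1/4$ for $C$ large.

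For the forms, Bernstein's (Freedman's) inequality for a martingale with quadratic variation at most $T$ gives
\[
 \Prob\Bigl(\sup_{t\leq T}\abs{Z_\ell(t)}\geq b\Bigr)\leq 2\exp\bigl(-b^2/(2T)\bigr)=2\exp\bigl(-kb^2/(2C)\bigr)\leq \frac{k^2}{100\,m}
\]
once $C_1$ is large relative to $C$. This uses $\log(3+m/k^2)\geq \log(m/k^2)$ together with $\log 3\geq 1$ when $m\leq k^2$. Summing over $\ell$ and applying Markov's inequality, the probability that at least $k^2/16$ forms freeze by time $T$ is at most $16/100$.

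Hence with positive probability $\tau\leq T$ and fewer than $k^2/16$ forms are frozen at time $\tau$. Then $\tau$ must be the time the eigenvalue condition triggered, so $Q=Q_\tau$ has at least $\lceil k/4\rceil$ eigenvalues equal to $\pm1$, and all forms satisfy $\abs{w_\ell^{\mathsf T}Qw_\ell}\leq b$.

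\textbf{Main obstacle.} The main difficulty is making the walk rigorous near the freezing events. The SDE has a state-dependent projection that jumps when constraints activate, and an eigenvalue must be exactly frozen rather than merely close to $\pm1$. I would handle this as Lovett--Meka do. Use a discrete walk with tiny Gaussian steps and freeze an eigenvalue once it is within $\delta$ of $\pm1$; the contraction property survives with high probability because each step has operator norm $\ll\delta$. Freeze a form once it is within $\delta$ of $b$, and use the discrete Freedman inequality in place of the continuous one. At the end, replace the near-$\pm1$ eigenvalues by exactly $\pm1$. This changes each form by at most $\delta$, which is absorbed into $C_1$.
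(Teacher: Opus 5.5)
Your proposal is correct and follows essentially the same route as the paper. Both arguments use the same matrix Lovett--Meka walk that freezes the $\pm1$ eigenspaces and the forms reaching level $b$, the same It\^o-isometry bound against $\|Q\|_F^2\le k$ with $T\asymp 1/k$, and the same Bernstein-plus-Markov count of frozen forms. The paper makes the walk rigorous with your ``alternative'': projected Brownian motion run piecewise between finitely many stopping times. There eigenvalues hit $\pm1$ exactly by continuity, so no discretization or final rounding is needed.
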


\begin{proof}[Proof of \Cref{thm:psd} given \Cref{prop:spectral}]
Let
\[
 \eta
 =
 C_1\sqrt{\frac{\log(3+m/r^2)}{r}},
\]
where $C_1$ is the constant in \Cref{prop:spectral}.  Since
$\log(3+x)\asymp\log(2+x)$ uniformly for $x\geq0$, if
$\eta\geq1/10$, then $M=0$ gives the result after increasing the
universal constant in \Cref{thm:psd}. We may therefore assume that
$\eta<1/10$.

We construct positive-semidefinite matrices $M_0,M_1,\ldots$, starting
from $M_0=I_{2r}$. At each stage, the new rank is at most seven eighths
of the preceding rank, while every prescribed
quadratic form changes by a multiplicative factor in
$[1-\eta,1+\eta]$.

Suppose that $M_j$ has rank $k_j>r$. Since all prescribed quadratic
forms are initially equal to $1$ and remain positive throughout the
construction, we may define
\[
 w_{\ell,j}
 =
 \frac{M_j^{1/2}u_\ell}
 {\sqrt{u_\ell^{\mathsf T}M_ju_\ell}},
 \qquad \ell\leq m.
\]
These are unit vectors in
$\operatorname{range}(M_j)\simeq\R^{k_j}$.

Apply \Cref{prop:spectral} to $w_{1,j},\ldots,w_{m,j}$, and let $Q_j$
be the resulting symmetric operator on $\operatorname{range}(M_j)$,
extended by zero on $\ker M_j$. Since at least $k_j/4$ eigenvalues of
$Q_j$ are equal to $1$ or $-1$, one of the two signs has multiplicity
at least $k_j/8$. Choose $\sigma_j\in\{-1,1\}$ so that
\[
 \dim\ker\left(
 (I+\sigma_jQ_j)|_{\operatorname{range}(M_j)}
 \right)
 \geq\frac{k_j}{8},
\]
and set
\[
 M_{j+1}
 =
 M_j^{1/2}(I+\sigma_jQ_j)M_j^{1/2}.
\]

Since $-I\preceq Q_j\preceq I$, the matrix $M_{j+1}$ is positive
semidefinite, and
\[
 \rank M_{j+1}\leq\frac{7k_j}{8}.
\]
For every $\ell\leq m$,
\[
 u_\ell^{\mathsf T}M_{j+1}u_\ell
 =
 u_\ell^{\mathsf T}M_ju_\ell
 \left(1+\theta_{\ell,j}\right),
\]
where
\[
 \theta_{\ell,j}
 =
 \sigma_j w_{\ell,j}^{\mathsf T}Q_jw_{\ell,j}.
\]
The function
$
 x\mapsto{\log(3+m/x^2)}/{x}
$
is decreasing on $(0,\infty)$.  Since $k_j>r$, it follows that
\[
 \abs{\theta_{\ell,j}}
 \leq
 C_1\sqrt{\frac{\log(3+m/k_j^2)}{k_j}}
 \leq\eta.
\]
Thus every factor $1+\theta_{\ell,j}$ is positive, so the iteration may
continue.  Since
$
 2\left(\frac78\right)^6<1,
$
after at most six steps we obtain a matrix $M_s$ with
$\rank M_s\leq r$. For every $\ell\leq m$,
\[
 u_\ell^{\mathsf T}M_su_\ell
 =
 \prod_{j=0}^{s-1}(1+\theta_{\ell,j}),
 \qquad s\leq6.
\]
Since $\eta<1/10$,
\[
 \abs{u_\ell^{\mathsf T}M_su_\ell-1}
 \leq C\eta
 \leq C\sqrt{\frac{\log(2+m/r^2)}{r}},
\]
where the last comparison again uses
$\log(3+x)\asymp\log(2+x)$. This completes the proof.
\end{proof}

\section{Construction of the symmetric contraction}
\label{sec:spectral-proof}

Finally, we prove \Cref{prop:spectral}.

\begin{proof}[Proof of \Cref{prop:spectral}]
Choose the universal constant $C_1\geq 40$, and set
\[
 \Lambda=\log\left(3+\frac{m}{k^2}\right),
 \qquad
 b=C_1\sqrt{\frac{\Lambda}{k}},
 \qquad
 T=\frac{C_1}{k}.
\]
If $b\geq1$, we may take $Q=I$, so assume that $b<1$.

Let $(B_t)_{t\geq0}$ be standard Brownian motion in the Euclidean space
$(\Sym_k,\langle\cdot,\cdot\rangle_F)$, and let
$(\mathcal F_t)_{t\geq0}$ be its natural filtration. We iteratively
construct a continuous adapted process $(Q_t)_{t\geq0}$, beginning with
\[
 \tau_0=0,
 \qquad
 Q_{\tau_0}=0.
\]
The construction will terminate when $t=T$ or when either of the
following conditions is reached:
\begin{equation}
\label{eq:walk-counting-thresholds}
 \dim\bigl(\ker(I-Q_t)\oplus\ker(I+Q_t)\bigr)
 \geq\left\lceil\frac{k}{4}\right\rceil
 \quad\text{or}\quad
 \#\left\{
 \ell\leq m:
 \abs{w_\ell^{\mathsf T}Q_tw_\ell}=b
 \right\}
 \geq\left\lceil\frac{k^2}{16}\right\rceil .
\end{equation}

\paragraph{\bf Construction of the process.}
Suppose that $Q_t$ has been defined up to a stopping time
$\tau_j<T$ and that neither condition in
\eqref{eq:walk-counting-thresholds} holds at time $\tau_j$. Define
\[
 E_{+,j}=\ker(I-Q_{\tau_j}),
 \qquad
 E_{-,j}=\ker(I+Q_{\tau_j}),
 \qquad
 F_j=(E_{+,j}\oplus E_{-,j})^\perp,
\]
and
\[
 J_j=
 \left\{
 \ell\leq m:
 \abs{w_\ell^{\mathsf T}Q_{\tau_j}w_\ell}=b
 \right\}.
\]
Let
\[
 \mathcal V_j=
 \left\{
 H\in\Sym_k:
 H=P_{F_j}HP_{F_j},
 \quad
 w_\ell^{\mathsf T}Hw_\ell=0
 \text{ for every }\ell\in J_j
 \right\},
\]
and let $\Pi^{(j)}$ be the Frobenius-orthogonal projection of $\Sym_k$
onto $\mathcal V_j$. All the objects defining $\Pi^{(j)}$ are
$\mathcal F_{\tau_j}$-measurable.

Starting from time $\tau_j$, evolve the process according to
\[
 Q_t
 =
 Q_{\tau_j}+\Pi^{(j)}(B_t-B_{\tau_j})
\]
until the first time $t>\tau_j$ at which either the restriction
$Q_t|_{F_j}$ has an eigenvalue equal to $1$ or $-1$, or
$
 \abs{w_\ell^{\mathsf T}Q_tw_\ell}=b
$
for some $\ell\notin J_j$. Let $\tau_{j+1}$ be the lesser of $T$ and
this first hitting time. The tracked quantities are continuous and
adapted, so $\tau_{j+1}$ is a stopping time; since they lie strictly
inside their respective barriers at time $\tau_j$, we also have
$\tau_{j+1}>\tau_j$.

By the strong Markov property,
$
 \bigl(B_{\tau_j+s}-B_{\tau_j}\bigr)_{s\geq0}
$
is standard Brownian motion in $\Sym_k$, independent of
$\mathcal F_{\tau_j}$. Since $\Pi^{(j)}$ is
$\mathcal F_{\tau_j}$-measurable,
$
 \left(\Pi^{(j)}(B_{\tau_j+s}-B_{\tau_j})\right)_{s\geq0}
$
is, conditioned on $\mathcal F_{\tau_j}$, standard Brownian motion in
$\mathcal V_j$.

The condition $H=P_{F_j}HP_{F_j}$ ensures that the eigenspaces
$E_{+,j}$ and $E_{-,j}$ remain fixed at eigenvalues $1$ and $-1$,
respectively, throughout this stage. Likewise, the conditions indexed
by $J_j$ ensure that every quadratic form which has reached a barrier
remains fixed there. Consequently,
\[
 E_{+,j}\subseteq E_{+,j+1},
 \qquad
 E_{-,j}\subseteq E_{-,j+1},
 \qquad
 J_j\subseteq J_{j+1},
\]
and these eigenspaces and quadratic forms remain fixed at every later
stage.

If $\tau_{j+1}=T$ or either condition in
\eqref{eq:walk-counting-thresholds} holds at time $\tau_{j+1}$,
terminate the construction; otherwise proceed to the next stage.
Every stage ending before $T$ adds at least one new eigendirection to
$E_{+,j}\oplus E_{-,j}$ or at least one new index to $J_j$. Hence only
finitely many stages occur. Denote the terminal time by $\tau$, and set
\[
 Q_t=Q_\tau,
 \qquad t>\tau.
\]

For each stage that is run, define
\[
 \Pi_t=\Pi^{(j)}
 \quad\text{for }\tau_j<t\leq\tau_{j+1},
 \qquad
 \Pi_t=0
 \quad\text{for }t>\tau,
\]
and set $\Pi_0=0$. Since $\Pi^{(j)}$ is
$\mathcal F_{\tau_j}$-measurable, $(\Pi_t)$ is adapted; the endpoint
convention makes it left-continuous. It is therefore predictable, and
\[
 Q_t=\int_0^t\Pi_s\,dB_s
 \qquad (t\geq0).
\]
Moreover, 
\[
 -I\preceq Q_t\preceq I,
 \qquad
 \abs{w_\ell^{\mathsf T}Q_tw_\ell}\leq b
 \quad\text{for every }\ell\leq m
\]
throughout the construction.

\paragraph{\bf Dimension of the available directions.}
Whenever the construction is running, neither condition in
\eqref{eq:walk-counting-thresholds} holds. Since the quantities involved
are integers,
\[
 \dim F_j>\frac{3k}{4},
 \qquad
 \abs{J_j}<\frac{k^2}{16}.
\]
The symmetric matrices supported on $F_j$ form a vector space of
dimension
\[
 \frac{\dim F_j(\dim F_j+1)}{2},
\]
and each index in $J_j$ imposes at most one linear condition. Hence
\[
 \dim\mathcal V_j
 \geq
 \frac{\dim F_j(\dim F_j+1)}{2}-\abs{J_j}
 >
 \frac{k^2}{5}.
\]

\paragraph{\bf Measuring progress.}
Let $\mathcal E$ be the event that neither condition in
\eqref{eq:walk-counting-thresholds} is reached by time $T$. On
$\mathcal E$, the construction runs until time $T$, and
\[
 \rank(\Pi_t)>\frac{k^2}{5}
\]
for almost every $t\in[0,T]$.

Since $Q_T=Q_\tau$ is a contraction,
\[
 \|Q_\tau\|_F^2\leq k.
\]
On the other hand, the It\^o isometry and the fact that each $\Pi_t$ is
an orthogonal projection give
\[
 \E\|Q_\tau\|_F^2
 =
 \E\|Q_T\|_F^2
 =
 \E\int_0^T
 \|\Pi_t\|_{\mathrm{HS}(\Sym_k)}^2\,dt
 =
 \E\int_0^\tau\rank(\Pi_t)\,dt.
 \]
Therefore
\[
 k
 \geq
 \E\|Q_\tau\|_F^2
 \geq
 \Prob(\mathcal E)\frac{k^2}{5}T
 =
 \Prob(\mathcal E)\frac{C_1k}{5},
\]
and hence
\[
 \Prob(\mathcal E)
 \leq\frac{5}{C_1}
 \leq\frac18.
\]

\paragraph{\bf Quadratic forms reaching the barriers.}
Fix $\ell\leq m$ and, for $0\leq t\leq T$, define
\[
 Z_\ell(t)
 =
 w_\ell^{\mathsf T}Q_tw_\ell
 =
 \left\langle
 w_\ell w_\ell^{\mathsf T},Q_t
 \right\rangle_F.
\]
Since $\Pi_t$ is a self-adjoint orthogonal projection on $\Sym_k$,
\[
 dZ_\ell(t)
 =
 \left\langle
 \Pi_t(w_\ell w_\ell^{\mathsf T}),dB_t
 \right\rangle_F.
\]
Thus $Z_\ell$ is a continuous martingale with quadratic variation
\[
 d[Z_\ell]_t
 =
 \left\|
 \Pi_t(w_\ell w_\ell^{\mathsf T})
 \right\|_F^2\,dt
 \leq dt,
\]
because $\|w_\ell w_\ell^{\mathsf T}\|_F^2=1$. In particular,
$[Z_\ell]_T\leq T$. Bernstein's inequality for continuous martingales
(see, e.g., \cite[Chapter~IV, Exercise~3.16]{RY}) therefore gives
\[
 \Prob\left\{
 \sup_{0\leq t\leq T}\abs{Z_\ell(t)}\geq b
 \right\}
 \leq
 2\exp\left(-\frac{b^2}{2T}\right)
 =
 2e^{-C_1\Lambda/2}.
\]

Let $N$ be the number of indices $\ell$ whose quadratic forms reach
either barrier by time $\tau$. By linearity of expectation,
\[
 \E N
 \leq
 2me^{-C_1\Lambda/2}
 =
 \frac{2m}{(3+m/k^2)^{C_1/2}}
 \leq
 \frac{2k^2}{3^{C_1/2-1}}
 <
 \frac{k^2}{128}.
 \]
Markov's inequality gives
\[
 \Prob\left\{
 N\geq\left\lceil\frac{k^2}{16}\right\rceil
 \right\}
 \leq\frac18.
\]

If the spectral condition in
\eqref{eq:walk-counting-thresholds} is not reached by the terminal time $T$,
then either $\mathcal E$ occurs or at least
$\lceil k^2/16\rceil$ quadratic forms have reached their barriers.
Consequently, with probability at least $3/4$,
\[
 \dim\bigl(\ker(I-Q_\tau)\oplus\ker(I+Q_\tau)\bigr)
 \geq\left\lceil\frac{k}{4}\right\rceil.
\]
Choose any realization of this event and set $Q=Q_\tau$. Then
\[
 -I\preceq Q\preceq I,
 \qquad
 \#\{i:\abs{\lambda_i(Q)}=1\}
 \geq\left\lceil\frac{k}{4}\right\rceil,
\]
and
\[
 \max_{\ell\leq m}
 \abs{w_\ell^{\mathsf T}Qw_\ell}
 \leq
 b
 =
 C_1\sqrt{\frac{\log(3+m/k^2)}{k}}.
\]
This proves \Cref{prop:spectral}.
\end{proof}

\bibliographystyle{amsplain0}
\bibliography{main}

\end{document}